\renewcommand*\env@matrix[1][*\c@MaxMatrixCols c]{%
  \hskip -\arraycolsep
  \let\@ifnextchar\new@ifnextchar
  \array{#1}}
\newtheorem{defn}{Definition}[section]
\newtheorem{lem}{Lemma}[section]
\newtheorem{thm}{Theorem}[section]
\newtheorem{prop}{Proposition}[section]
\newcommand{\Tr}{\text{Tr}}
\DeclareMathOperator*{\argmin}{arg\,min}
\DeclareMathOperator*{\argmax}{arg\,max}
\begin{document}

\begin{frontmatter}

\title{Selective inference in regression models with groups of variables}
\runtitle{Selective inference with groups of variables}

\begin{aug}
\author{\fnms{Joshua} \snm{Loftus}\corref{}\ead[label=e1]{joftius@stanford.edu}\thanksref{t1,t2}}
\and
\author{\fnms{Jonathan} \snm{Taylor}\ead[label=e2]{jonathan.taylor@stanford.edu}\thanksref{t2}}
\thankstext{t1}{Supported in part by NIH grant 5T32GMD96982.}
\thankstext{t2}{Supported in part by NSF grant DMS 1208857 and AFOSR grant 113039.}

\runauthor{J. R. Loftus \and J. E. Taylor}

\affiliation{Stanford University}
\address{Department of Statistics\\Sequoia Hall\\390 Serra
  Mall\\Stanford, CA\\
\printead{e1,e2}}
\end{aug}

\begin{abstract}
We discuss a general mathematical framework for selective inference
with supervised model selection procedures characterized by 
quadratic forms in the outcome variable. Forward stepwise with groups
of variables is an important special case as it allows models with
categorical variables or factors.
Models can be chosen by AIC, BIC, or a fixed number of steps.
We provide an exact significance test for
each group of variables in the selected model based on
appropriately truncated $\chi$ or $F$ distributions for the cases of
known and unknown $\sigma^2$ respectively.
An efficient software
implementation is available as a package in the R statistical
programming language.
\end{abstract}



\end{frontmatter}

\section{Introduction}

A common strategy in data analysis 
is to assume a class of models $\mathcal M$
specified {\em a priori} contains a particular model $M \in \mathcal
M$ which is well-suited to the data. Unfortunately, when the data has
been used to pick a particular model $\hat M$, inference about
the selected model is usually invalid. For example, in the regression
setting with many predictors if the ``best'' predictors are chosen by
lasso or forward stepwise then the classical $t$, $\chi^2$, or $F$
tests for their coefficients will be anti-conservative. The most
widespread solution is data splitting, where one subset of the data is
used only for model selection and another subset is used only for
inference. This results in a loss of accuracy for model selection and
power for inference.

Recently, \cite{lockhart2014, lee2013exact} developed methods of
adjusting the classical significance tests to account for model
selection. These methods make use of the fact that the model chosen by
lasso and forward stepwise can be characterized by affine inequalities
in the data. Inference conditional on the selected model requires
truncating null distributions to the affine selection region. The
present work provides a new and more general mathematical framework
based on quadratic inequalities.
Forward stepwise with groups
of variables serves as the main example which we develop in some
detail.

Allowing variables to be grouped provides important modeling
flexibility to respect known structure in the predictors.
For example, to fit hierarchical interaction models as in
\cite{lim2013glinternet}, non-linear models with groups of spline
bases as in \cite{alex2015gamsel}, or factor models where groups
are determined by scientific knowledge such as variables in
common biological pathways.
Further, in the presence of
categorical variables, without grouping a model selection procedure
may include only some subset of the levels that variable. 

All methods described here are implemented in the
\texttt{selectiveInference} R package \citep{SIR} available on CRAN.

\subsection{California county health data example}

Information on health and various other indicators has been aggregated
to the county level by \cite{countydata}. We attempt to fit a linear
model for California counties
with outcome given by log-years of potential life lost, and
search among 31 predictors using forward stepwise. This example has
groups only of size 1, but still serves to illustrate the procedure.
With the BIC
criteria for choosing model size the resulting model has 8 variables
listed in Table~\ref{tab:counties}. The naive $p$-values computed from
standard hypothesis tests for linear regression models are biased to
be small because we chose the 8 best predictors out of 31. 

\begin{table}[ht]
\caption{{Significance test $p$-values for predictors chosen
by forward stepwise with BIC.}}
\label{tab:counties}
\centering
\begin{tabular}{r|r|rr}
  \hline
Step & Variable & Naive & Selective \\ 
  \hline
1 & \%80th Percentile Income & $<$0.001 & 0.001 \\ 
2 &   Injury Death Rate & $<$0.001 & 0.086 \\ 
3 &   Chlamydia Rate & 0.078 & 0.287 \\ 
4 &   \%Obese & $<$0.001 & 0.170 \\ 
5 &   \%Receiving HbA1c & $<$0.001 & 0.335 \\ 
6 &   \%Some College & 0.005 & 0.864 \\ 
7 &   Teen Birth Rate & 0.071 & 0.940 \\ 
8 &   Violent Crime Rate & 0.067 & 0.179 \\ 
   \hline
\end{tabular}
\end{table}

The selective $p$-values reported here are the $T\chi$ statistics
described in Section~\ref{sec:Tchi}. These have been adjusted to
account for model selection. Using the data to select a model and then
conducting inference about that model means we are randomly choosing
which hypotheses to test. In this scenario, \cite{fithian2014optimal}
propose controlling the {\em selective type 1 error} defined as
\begin{equation}
  \label{eq:type1err}
  \mathbb P_{M,H_0(M)}(\text{reject } H_0|M \text{ selected}).
\end{equation}
The notation $H_0(M)$ emphasizes the fact that the hypothesis depends
upon the model $M$. Scientific studies reporting $p$-values which
control \eqref{eq:type1err} would not suffer from model selection bias,
and consequently would have greater replicability.

One of the most common methods which controls selective type 1 error
is data splitting, where independent subsets of the data are used for
model selection and inference separately. Unfortunately, by using less
data to select the model and less data to conduct tests this method
suffers from loss of model selection accuracy and lower power. The
$T\chi$ and $T_F$ hypothesis tests described in the present work
control \eqref{eq:type1err} while using the whole data for both
selection and inference.

\subsection{Background: the affine framework}

Many model selection procedures, including lasso, forward stepwise,
least angle regression, marginal screening, and others, can be
characterized by affine inequalities. In other words, for each $m \in
\mathcal M$ there exists a matrix $A_m$ and vector $b_m$ such that
$\hat M(y) = m$ is equivalent to $A_my \leq b_m$. Hence, assuming 
\begin{equation}
  \label{eq:normalmodel}
  y = \mu + \epsilon, \quad \epsilon \sim N(0, \sigma^2 I)
\end{equation}
we can carry out inference conditional on $\hat M(y) = m$ by
constraining the multivariate normal distribution to the polyhedral
region $\{ z : A_mz \leq b_m \}$. 

One limitation of these methods is that the definition of a model $m$
includes not only the active set $E$ of variables chosen by the lasso
or forward stepwise, but also their signs. Without including signs,
the model selection region becomes a union of $2^{|E|}$
polytopes. Details are provided in \cite{lee2013exact}. The forward
stepwise and least angle regression cases \textit{without} groups of
variables appear in \cite{tibshirani2014exact}. 

In an earlier work, \cite{loftus2014significance} iterated the global
null hypothesis test of \cite{taylor2013tests} at each step of forward
stepwise with groups. The test there is only exact at the first step
where it coincides with the $T\chi$ test described here, and
empirically it was increasingly conservative at later steps.
The present work does not pursue sequential testing, and instead
conducts exact tests for all variables included in the final model.

\subsection{The quadratic framework}

Our approach will exploit the structure of model selection procedures
characterized by quadratic inequalities in the following sense. 

\begin{defn}
\label{def:quadratic}
A quadratic model selection procedure is a map $M : \mathcal X \to
\mathcal M$ determining a model $m = M(y)$, such that for any $m \in
\mathcal M$ 
\begin{equation}
\begin{aligned}
  \label{eq:quaddecomp}
  E_m &:= \{ y : M(y) = m \} = \bigcap_{j \in J_m} E_{m,j},\\
  E_{m,j} &:= \{ y : y^TQ_{m,j}y + a_{m,j}^Ty + b_{m,j} \geq 0 \}
\end{aligned}
\end{equation}
\end{defn}
The power of this definition lies in the ease with which we can
compute one-dimensional slices through $E_m$. This enables us to
compute closed-form results for certain one-dimensional
statistics and to potentially implement hit-and-run sampling
schemes. Consider tests based on $T^2 := \| Py \|^2_2$ for some fixed
projection matrix $P$. Write $u = Py/T$, $z = y - Py$, and
substitute $uT + z$ for $y$ in the definition
\eqref{eq:quaddecomp}. Conditioning on $z$ and $u$, the only remaining
variation is in $T$, and the equations above are univariate
quadratics. This allows us to determine the truncation region for $T$
induced by the model selection $M(y) = m$. We now apply this approach
and work out the details for several examples. The example in
Section~\ref{sec:TF} is a slight extension beyond the quadratic
framework that illustrates how this approach may be further
generalized. 

\section{Forward stepwise with groups}
\label{sec:groupfs}

This section establishes notation and describes the particular variant
of grouped forward stepwise referred to throughout this paper.
Readers familiar with the statistical programming language R may know
of this algorithm as the one implemented in the \texttt{step}
function. Because we are interested in groups of variables, the
stepwise algorithm is not stated in terms of univariate correlations
the way forward stepwise is often presented.

We observe an outcome $y \sim N(\mu, \sigma^2I)$ and wish to
model $\mu$ using a sparse linear model $\mu \approx X\beta$ for a
matrix of predictors $X$ and sparse coefficient vector $\beta$. We
assume the matrix $X$ is subdivided {\em a priori} into groups 
\begin{equation}
  \label{eq:groupdesign}
  X = \begin{bmatrix} X_1 & X_2 & \cdots & X_G \end{bmatrix}
\end{equation}
with $X_g$ denoting the submatrix of $X$ with all columns of the group
$g$ for $1 \leq g \leq G$.
Forward
stepwise, described in Algorithm~\ref{algo:fs}, picks at each step the
group minimizing a penalized residual sums of squares criterion.
The penalty is an AIC criterion penalizing group size, since otherwise
the algorithm will be biased toward selecting larger groups.
First we describe the case where $\sigma^2$ is known.
With a parameter $k \geq 0$, the penalized RSS criterion is 
\begin{equation}
g_1 = \argmin  \| (I - P^1_g)y \|_2^2 + k \sigma^2 \Tr(P^1_g)
\end{equation}
where $P^1_g := X_g X_g^\dagger$ and $\Tr(P^1_g)$ is the degrees of
freedom used by $X_g$. Minimizing this with respect to $g$ is
equivalent to maximizing 
\begin{equation}
  \label{eq:groupfsRSS}
RSS_1(g) := y^T P^1_g y - k \sigma^2 \Tr(P^1_g)
\end{equation}
Note that the event $E_1 := \{ z : \argmax RSS_1(g) = g_1 \}$ can be
decomposed as an intersection 
\begin{equation}
E_1 = \bigcap_{g \neq g_1} \{ z : z^T Q^1_g z + k^1_g \geq 0 \}
\end{equation}
where $Q^1_g := P^1_{g_1} - P^1_g$ and $k^1_g :=
k \sigma^2 \Tr(P^1_{g_1}-P^1_g)$. We ignore ties and use $>$ and $\geq$
interchangeably. Ties do not occur with probability 1 under fairly
general conditions on the design matrix.  

\begin{figure}
\begin{algorithm}[H]
  \caption{Forward stepwise with groups, $\sigma^2$ known}
  \label{algo:fs}
    \SetAlgoLined
    \DontPrintSemicolon
    \KwData{An $n$ vector $y$ and $n \times p$ matrix $X$ with $G$ groups, complexity penalty parameter $k \geq 0$}
    \KwResult{Ordered active set $A$ of groups included in the model at each step}
    \Begin{
    $A \leftarrow \emptyset$, $A^c \leftarrow \{ 1, \ldots, G\}$, $r_0 \leftarrow y$\;
    \For{$s=1$ \KwTo $steps$}{
        $P_{g} \leftarrow X_{g}X_{g}^\dagger$\;
        $g^* \leftarrow \argmax_{g \in A^c} \{ r_{s-1}^T P_g r_{s-1} - k\sigma^2\Tr (P_g) \}$\;

        $A \leftarrow A \cup \{ g^* \}$, $A^c \leftarrow A^c \backslash \{ g^* \}$\;
        \For{$h \in A^c$}{
            $X_h \leftarrow (I-P_{g^*}) X_h$\;
        }
        $r_s \leftarrow (I-P_{g^*}) r_{s-1}$\;
    }
    \KwRet{$A$}
    }
\end{algorithm}
\end{figure}

After adding a group of variables to the active set, we orthogonalize
the outcome and the remaining groups with respect to the added
group. So, at step $s > 1$, we have $A_s = \{ g_1, g_2, \ldots,
g_{s-1} \}$ and define 
\begin{equation}
\begin{aligned}
P_s &:= (I-P^{s-1}_{g_{s-1}})(I-P^{s-2}_{g_{s-2}}) \cdots (I-P^1_{g_1}) \\
r_s &:=  P_sy, \quad X_g^s := P_s X_g, \quad P_g^s := X_g^s (X_g^s)^\dagger
\end{aligned}
\end{equation}
Then the group added at step $s$ is
\begin{equation}
g_s = \argmax_{g \in A^c} r_s^T P^s_g r_s - k\sigma^2\Tr P^s_g
\end{equation}
Now for
\begin{equation}
E_s := \{ z \in E_{s-1} : \argmax RSS_s(g) = g_s \}
\end{equation}
we have
\begin{equation}
  E_s = E_{s-1} \cap \bigcap_{\substack{g \in A_s^c \\ g \neq g_s}} \{
  z : z^TP_s Q^s_g P_sz + k^s_g \geq 0 \} \\ 
\end{equation}
where $Q^s_g := P^s_{g_s} - P^s_g$ and $k^s_g :=
\sigma^2 k \Tr(P^s_{g_s}-P^s_g)$. We have established the following lemma. 
\begin{lem}
  \label{lem:groupfsquad}
  The selection event $E$ that forward stepwise selects the ordered
  active set $A_s = \{ g_1,
  g_2, \ldots, g_s \}$ can be written as an intersection of quadratic
  inequalities all of the form $y^T Q y + b \geq 0$, which satisfies
  Definition~\ref{def:quadratic}.
\end{lem}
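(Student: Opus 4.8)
The plan is to argue by induction on the step number $s$, showing at each step that the event ``group $g_s$ is selected'' contributes finitely many inequalities of the form $y^T Q y + b \ge 0$, and that these accumulate across steps into a single finite intersection. The structure of the argument is already implicit in the recursion defining $E_1, E_2, \dots$; the work is to check that nothing outside the quadratic framework of Definition~\ref{def:quadratic} creeps in.

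First I would dispatch the base case $s=1$. Selecting $g_1$ at the first step is precisely the requirement $RSS_1(g_1) \ge RSS_1(g)$ for every competing $g \ne g_1$. Substituting the definition \eqref{eq:groupfsRSS} and rearranging, each such comparison reads $y^T(P^1_{g_1}-P^1_g)y + k\sigma^2\Tr(P^1_{g_1}-P^1_g) \ge 0$, that is $y^TQ^1_g y + k^1_g \ge 0$. Intersecting over the finitely many $g \ne g_1$ yields $E_1$ in exactly the required form, the linear term $a_{m,j}$ of Definition~\ref{def:quadratic} being identically zero because $RSS_1$ is purely quadratic plus a constant.

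For the inductive step I would assume $E_{s-1}$ has already been written as a finite intersection of inequalities $y^TQy + b \ge 0$. The crucial observation is that, once we condition on the ordered active set $\{g_1,\dots,g_{s-1}\}$, the residual-forming projection $P_s$ is a fixed matrix, and hence so are $P^s_g$, $Q^s_g = P^s_{g_s}-P^s_g$, and $k^s_g$: the recursion defining $P_s$ uses only the \emph{indices} of the previously selected groups together with $X$, and carries no further dependence on $y$. Writing $r_s = P_s y$ and comparing $RSS_s(g_s) \ge RSS_s(g)$ then gives $y^T P_s Q^s_g P_s y + k^s_g \ge 0$, again a purely quadratic inequality in $y$ with matrix $P_s Q^s_g P_s$ (symmetric, since $P_s$ is the orthogonal projection onto the complement of the accumulated span) and constant $k^s_g$. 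Adjoining these finitely many inequalities over $g \in A_s^c,\ g \ne g_s$ to the intersection defining $E_{s-1}$ produces $E_s$, closing the induction. At the terminal step the selection event $E$ is therefore a finite intersection of inequalities of the form $y^TQy + b \ge 0$, which is Definition~\ref{def:quadratic} with every $a_{m,j}=0$.

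The step I expect to require the most care is the inductive one, specifically verifying that routing the step-$s$ criterion through the orthogonalized residual $r_s = P_s y$ introduces no hidden dependence of the defining matrices on $y$. This is exactly where conditioning on the ordered active set matters: fixing the indices $g_1,\dots,g_{s-1}$ renders $P_s$ deterministic, so $P_s Q^s_g P_s$ is a constant matrix and the constraint is genuinely quadratic in $y$. The remaining bookkeeping --- finiteness of the index set $J_m$, and translating each $\argmax$ condition into its collection of pairwise comparisons --- is routine.
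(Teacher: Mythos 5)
Your proof is correct and follows essentially the same route as the paper: the paper derives $E_1$ as the intersection of the pairwise comparisons $y^TQ^1_g y + k^1_g \ge 0$ and then, recursively, $E_s = E_{s-1} \cap \bigcap_{g} \{z : z^TP_sQ^s_gP_sz + k^s_g \ge 0\}$, exactly the induction you describe, with the same key observation that conditioning on the ordered active set makes $P_s$ (and hence $P_sQ^s_gP_s$ and $k^s_g$) deterministic. The only cosmetic difference is that the paper folds this derivation into the exposition before the lemma rather than presenting it as a formal induction, and it additionally remarks that ties are ignored (so $>$ and $\ge$ are used interchangeably), a point worth a sentence in your write-up.
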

We leverage this fact to compute truncation intervals for selective
significance tests based on one-dimensional slices through $E$. When
$\sigma$ is known, the truncated $\chi$ test statistic $T\chi$
described in Section~\ref{sec:Tchi} is used to test the significance
of each group in the active set. For unknown $\sigma$, the
corresponding truncated $F$ statistic $T_F$ is detailed in
Section~\ref{sec:TF}. The unknown $\sigma^2$ case alters the above
algorithm by changing the RSS criteria to the form
\begin{equation}
g_s = \argmax (r_s^Tr_s-r_s^TP_{g_s}^sr_s) \exp(-k \Tr(P_{g_s}^s))
\end{equation}
This merely replaces the additive constant with a multiplicative
one. Finally, we note that $k = 2$ corresponds to the classic AIC
criterion \cite{akaike1973information},
$k = \log(n)$ yields BIC \cite{schwarz1978estimating},
and $k = 2\log(p)$ gives the RIC criterion of
\cite{foster1994risk}. The extension of Algorithm~\ref{algo:fs} to
allow stopping using the AIC criterion is given later in
Section~\ref{sec:aic}. 

\subsection{Linear model and null hypothesis}
Once forward stepwise has terminated yielding an active set
$A$, we wish to conduct inference about the model regressing $y$ on
$X_A$ where the subscript denotes all columns of $X$ corresponding to
groups in $A$.
Throughout this paper we do not assume the
linear model is correctly specified, i.e. it is possible that $\mu
\neq X_A\beta$ for all $\beta$.
This is a strength for robustness purposes, however it also
results in lower power against alternatives where the linear model is
correctly specified and forward stepwise captures the true active
set.

Even when the linear model is not correctly specified, there is still
a well-defined best linear approximation
\begin{equation}
  \label{eq:beta}
  \beta_A^* := \argmin \mathbb E[ \|y - X_A\beta\|_2^2]
\end{equation}
with the usual estimate given by the least squares fit $\hat \beta_A =
X_A^\dagger y$.

The probability modeling assumption throughout this paper is
\eqref{eq:normalmodel}, and the null hypothesis for a group $g$ in $A$
is given by
\begin{equation}
  \label{eq:null}
  H_0(A,g) : \beta_{A,g}^* = 0 
\end{equation}
where $\beta_{A,g}$ denotes the coordinates of coefficients for group
$g$ in the linear model determined by $X_A$. Note that this null
hypothesis depends on $A$, as the coefficients for the same group $g$
in a different active set $A' \neq A$ with $g \in A'$
have a different meaning, namely
they are regression coefficients controlling for the variables in $A'$
rather than in $A$.
Several equivalent formulations of \eqref{eq:null} are
\begin{equation}
\begin{aligned}
  \label{eq:null2}
  H_0(A,g) &: X_g^T(I-P_{A\slash g})\mu = 0, \\
  H_0(A,g) &: P_g(I-P_{A \slash g})\mu = 0, \\
  H_0(A,g) &: \tilde P_g \mu = 0
\end{aligned}
\end{equation}
where $\tilde P_g$ is the projection onto the column space of
$(I-P_{A \slash g})X_g$.
We use these forms to emphasize that the probability
model for $y$ is determined by $\mu$ and not by $\beta^*$.

\section{Truncated $\chi$ significance test}
\label{sec:Tchi}

We now describe our significance test for a group of variables $g$ in
the active set $A$, under the probability model $y \sim N(\mu,
\sigma^2I)$ where we assume $\sigma$ is known or can be estimated
independently from $y$. Section~\ref{sec:TF} concerns the case where
$\sigma$ is unknown.

\subsection{Test statistic, null hypothesis, and distribution}

First we consider forward stepwise with a fixed, deterministic number
of steps $S$. Choosing $S$ with an AIC-type criterion requires some
additional work described in Section~\ref{sec:aic}.
Let $A$ denote the ordered active set and $E$ the event that $A =
\{ g_1, g_2, \ldots, g_S \}$. With $\tilde X_g = (I-P_{A\slash g})X_g$,
let $\tilde X_g = U_gD_gV_g^T$ be the singular value decomposition of
$\tilde X_g$. Note that $\tilde P_g = U_gU_g^T$.
In this notation, \textit{without model selection}, the test statistic
and null distribution
\begin{equation}
  \label{eq:chisaturatednull}
  T^2 := \sigma^{-2} \| U_g^T y \|_2^2 \sim \chi^2_{\Tr(\tilde P_g)} \text{ under }
  H_0(A,g) 
\end{equation}
form the usual regression model hypothesis test for the group $g$ of
variables in model $X_A$ when $\sigma^2$ is known.

Let $u \propto \tilde P_gy$ be a unit vector, so $y = z + Tu$ where $z =
(I-\tilde P_g)y$. Under the null, $(T,z,u)$ are independent because
$\tilde P_gy$ and $z$ are orthogonal and $(T, u)$ are independent by
Basu's theorem. We condition on $z$ and $u$ without changing the test
under the null, but it should be noted that this may result in some
loss of 
power under certain alternatives. Now the only remaining variation is
in $T$, and we still have $T^2 \sim \chi^2_{\Tr(\tilde P_g)}$ if there
is no model selection. Finally, to obtain the selective hypothesis
test we apply the following theorem with $P_A = \tilde P_g$. 
\begin{thm}
  \label{thm:tchi}
  If $y \sim N(\mu, \sigma^2I)$ with $\sigma^2$ known, and $P_A\mu = 0$
  for some $P_A$ which is constant on $\{ M(y) = A \}$, 
  then defining $r := \Tr(P_A)$, $R := P_Ay$, $u := R/\| R\|_2$, $z: = y -
  R$, the truncation interval  $M_A := \{ t \geq 0 : M(ut\sigma+z) = A
  \}$, and the observed statistic $T = \|R\|_2/\sigma$, we have 
  \begin{equation}
    \label{eq:tchilaw}
    T | (A, z, u) \sim \left.\chi_r \right|_{M_A}
  \end{equation}
  where the vertical bar denotes truncation. Hence, with $f_r$ the pdf
  of a central $\chi_r$ random variable 
  \begin{equation}
    \label{eq:tchisurv}
    T\chi := 
    \frac{\int_{M_A \cap [T, \infty]} f_r(t)dt }{\int_{M_A}
      f_r(t)dt} \sim U[0,1]
  \end{equation}
  is a $p$-value. 
\end{thm}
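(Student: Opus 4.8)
The plan is to establish the conditional law \eqref{eq:tchilaw} by showing that, under the null, $(T,z,u)$ are mutually independent with $T\sim\chi_r$ marginally, and that conditioning on the selection event $\{M(y)=A\}$ amounts to nothing more than truncating $T$ to the one-dimensional set $M_A$; the $p$-value claim \eqref{eq:tchisurv} will then drop out of the probability integral transform.

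First I would exploit the hypothesis $P_A\mu=0$. Writing $y=\mu+\epsilon$ gives $R=P_Ay=P_A\epsilon$, so $R$ --- and hence $T$ and $u$ --- carries no information about the mean and satisfies $R\sim N(0,\sigma^2 P_A)$. Since $P_A$ is an orthogonal projection of rank $r$, it follows that $T^2=\|R\|_2^2/\sigma^2\sim\chi^2_r$, i.e.\ $T\sim\chi_r$ with no truncation, while the complementary coordinate is $z=(I-P_A)y=\mu+(I-P_A)\epsilon$ and the identity $y=z+T\sigma u$ holds with $u=R/\|R\|_2\in\mathrm{col}(P_A)$.

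Next I would prove the independence of $(T,z,u)$. Because $P_A$ and $I-P_A$ are orthogonal, the jointly Gaussian vectors $P_A\epsilon$ and $(I-P_A)\epsilon$ are uncorrelated and therefore independent, so $R\perp z$ and thus $(T,u)\perp z$. Within $\mathrm{col}(P_A)$ the vector $R$ is isotropic, so by Basu's theorem its magnitude $T$ and direction $u$ are independent. Hence $(T,z,u)$ are mutually independent, and in particular conditioning on $(z,u)$ leaves the law of $T$ unchanged: $T\mid(z,u)\sim\chi_r$. The crux is then to rewrite the selection event in these coordinates. Along the ray $y(t):=z+t\sigma u$ one computes, using $P_Au=u$ and $P_Az=0$, that $P_Ay(t)=t\sigma u$; provided $P_A$ is the same projection evaluated at $y(t)$ --- which is exactly the assumed constancy of $P_A$ on $\{M(\cdot)=A\}$ --- the coordinates recomputed at $y(t)$ are precisely $(t,z,u)$. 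Therefore, given $(z,u)$, the event $\{M(y)=A\}$ coincides with $\{T\in M_A\}$, and conditioning $T\mid(z,u)\sim\chi_r$ on it yields $T\mid(A,z,u)\sim\chi_r|_{M_A}$, establishing \eqref{eq:tchilaw}. For \eqref{eq:tchisurv}, note that $T\chi$ is exactly the conditional survival function of the truncated law $\chi_r|_{M_A}$ evaluated at the observed $T$; since the survival function of any continuous variable is $U[0,1]$, $T\chi$ is uniform conditionally on every $(A,z,u)$, and uniformity survives averaging over $(z,u)$ within the selection event, giving control of the selective type 1 error.

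I expect the main obstacle to be the self-consistency of this one-dimensional slice, that is, verifying that moving along $z+t\sigma u$ reproduces the same decomposition $(t,z,u)$. This rests on $u\in\mathrm{col}(P_A)$, on $P_Az=0$, and --- most delicately --- on the constancy of $P_A$ over the selection region, without which the truncation set $M_A$ and the conditioning on it would not be well defined.
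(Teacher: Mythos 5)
Your proposal is correct and follows essentially the same route as the paper: establish that $(T,z,u)$ are mutually independent with $T\mid(z,u)\sim\chi_r$ under the null, observe that on the slice $y(t)=z+t\sigma u$ the selection event reduces to $\{T\in M_A\}$ (using the constancy of $P_A$ on the selection event to make this well defined), and conclude by the truncated survival-function transform. You simply make explicit the independence and ray self-consistency arguments that the paper places in the text preceding the theorem and in its fixed-$P$-then-random-$P$ step.
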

\begin{proof}
  First consider $P_A = P$ as fixed independently of $y$,
  and write $T_P, z_P, u_P$, and $M_A(P)$ to emphasize
  these are determined by $P$. Without selection, we know $(T_P, z_P,
  u_P)$ are independent and $T_P | (z_P, u_P) \sim \chi_r$. Since the
  selection event is determined entirely by $M(u_Pt\sigma + z_P) = A$,
  conditioning further on $\{ M(y) = A \}$ has the effect of
  truncating $T_P$ to $M_A(P)$, so
  \begin{equation}
    \label{eq:thm1cond}
    T_P | (A, z_P, u_P) \sim \left.\chi_r\right|_{M_A(P)}.
  \end{equation}
  Now we let $P = P_A$, and note that $T_{P_A}, z_{P_A}, u_{P_A}$
  depend on $P$ only through $A$, hence the same is true of $M_A(P)$.
  So since $P_A$ is fixed on $\{ M(y) = A \}$, the conclusion
  \eqref{eq:thm1cond} still holds with this random choice of $P$.

  This establishes \eqref{eq:tchilaw}, and
  \eqref{eq:tchisurv} follows by application of the truncated
  survival function transform, or, equivalently, truncated CDF
  transform.
\end{proof}

Since the right hand side of \eqref{eq:tchisurv} does not functionally
depend on any of the conditions involving $A$, the $p$-value result
holds unconditionally. But, in this case the meaning of the left hand
side cannot be interpreted as a test statistic for a fixed group of
variables in a specific model. We interpret the $p$-value $T\chi$
conditionally on selection so that it has the desired meaning.

This test is not optimal in the selective UMPU sense described in
\cite{fithian2014optimal}. We do not assume the linear model is
correct and we condition on $z$, this corresponds to the ``saturated
model'' in their paper. We do this for computational reasons as
described in the next section. Optimizing computation to make
increased power feasible by not conditioning on $z$ is an area of
ongoing work.

\subsection{Computing the $T\chi$ truncation interval}

Computing the support of $T\chi$ is possible due to
Lemma~\ref{lem:groupfsquad} since 
\begin{equation}
  \begin{aligned}
  \label{eq:tchiint}
  M_A &= \{ t \geq 0 : M(ut\sigma + z) = A \} \\
  &= \bigcap_{s=1}^S \bigcap_{\substack{g \in A^c_s \\ g \neq g_s}} \{
  t \geq 0 : (ut\sigma+z)^TQ^s_g(ut\sigma+z) + k^s_g \geq 0 \} \\
  &= \bigcap_{s=1}^S \bigcap_{\substack{g \in A^c_s \\ g \neq g_s}} \{
  t \geq 0 : a^s_g t^2 + b^s_g t + c^s_g \geq 0 \}
  \end{aligned}
\end{equation}
with
\begin{equation}
  \label{eq:quadcoefs}
  a^s_g := \sigma^2 u^TP_sQ^s_gP_su, \quad b^s_g := 2\sigma u^T P_sQ^s_gP_s
  z, \quad c^s_g := z^TP_sQ^s_gP_sz + k^s_g 
\end{equation}
Each set in the above intersection can be computed from the roots of
the corresponding quadratic in $t$, yielding either a single interval
or a union of two intervals. Computing $M_A$ as the intersection of
these unions of intervals can be done $O(\binom{G}{S})$. Empirically
we observe that the support $M_A$ is almost always a single interval,
a fact we hope to exploit in further work on sampling.

Since each matrix $P^s_g$ is low rank, in practice we only store the
left singular values which are sufficient for computing the above
coefficients. This improves both storage since we do not store the
full $n \times n$ projections, and computation since the coefficients
\eqref{eq:quadcoefs} require only several inner products rather than a
full matrix-vector product. 

\section{Truncated $F$ significance test}
\label{sec:TF}

We next turn to the case when $\sigma^2$ is unknown. We will see that
this example does not satisfy Definition~\ref{def:quadratic}, but the
spirit of the approach remains the same.
A selective $F$
test was first explored by \cite{gross2015internal} for affine model
selection procedures.
Following these authors, we write
\begin{equation}
\begin{aligned}
  \label{eq:TFdecomp}
  P_{\text{sub}} &:= P_{A \slash g}, \quad P_{\text{full}} := P_A\\
  R_1 &:= (I-P_{\text{sub}})y, \quad R_2 := (I-P_{\text{full}})y.
\end{aligned}
\end{equation}
and consider the $F$ statistic
\begin{equation}
  \label{eq:TF}
  T := \frac{\| R_1 \|_2^2 - \|R_2\|_2^2}{c\|R_2\|_2^2}, \text{ with
  } c := \frac{\Tr (P_{\text{full}} - P_{\text{sub}})}{\Tr (I-P_{\text{full}})} >
  0 
\end{equation}
Writing $u := R_1/r$ with $r = \|R_1\|_2$, we have the following decomposition
\begin{equation}
\begin{aligned}
  \label{eq:TFv}
  u &= u(T) = g_1(T) v_\Delta + g_2(T) v_2  \\
  \text{ with } v_\Delta &:= \frac{R_1-R_2}{\|R_1-R_2|_2^2}, \quad v_2 := \frac{R_2}{\|R_2\|_2} \\
   g_1(T) &:= r \sqrt{\frac{cT}{1+cT}}, \text{ and }
   g_2(T) := r \frac{1}{\sqrt{1+cT}}.
\end{aligned}
\end{equation}
Then $y = ru(T) + z$ for $z = P_{\text{sub}}y$.
After conditioning on $(r, z, v_\Delta, v_2)$ the only remaining
variation is in $T$ and hence $u(T)$. Without model selection, the
usual regression test for such nested models would be
\begin{equation}
  \label{eq:Fsaturatednull}
  T \sim F_{\Tr(P_A-P_{A\slash g}), \Tr(I-P_A)} \text{ under }
  H_0(A,g)  
\end{equation}

\begin{thm}
  \label{thm:tf}
  If $y \sim N(\mu, \sigma^2I)$ with $\sigma^2$ unknown, then with
  definitions \eqref{eq:TFdecomp}, \eqref{eq:TF}, \eqref{eq:TFv},
  the truncation interval $M_A := \{ t \geq 0 : M(u(t) + z) = A$,
  $d_1 := \Tr(P_A-P_{A\slash g})$, $d_2 := \Tr(I-P_A)$, under
  $H_0(A,g)$ we have 
  \begin{equation}
    \label{eq:tflaw}
    T | (z, u, A) \sim \left.F_{d_1, d_2}\right|_{M_A}
  \end{equation}
  where the vertical bar denotes truncation. Hence, with $f_{d_1, d_2}$ the pdf
  of an $F_{d_1, d_2}$ random variable
  \begin{equation}
    \label{eq:tfsurv}
    T_F := 
    \frac{\int_{M_A \cap [T, \infty]} f_{d_1, d_2}(t)dt }{\int_{M_A}
      f_{d_1, d_2}(t)dt} \sim U[0,1]
  \end{equation}
  is a $p$-value conditional on $(v_\Delta, v_2, A)$. Since the right
  hand side does not depend on these conditions, \eqref{eq:tfsurv}
  also holds marginally.
\end{thm}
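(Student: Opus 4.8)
The plan is to follow the template of the proof of Theorem~\ref{thm:tchi}, replacing the affine slice $y=\sigma t u+z$ by the curved slice traced out by $u(t)$ and the truncated $\chi_r$ by a truncated $F_{d_1,d_2}$. As there, I would first treat $P_{\text{sub}}$ and $P_{\text{full}}$ as fixed (independent of $y$), establish the conditional law, and only then invoke the fact that both projections are constant on $\{M(y)=A\}$ to license the random choices $P_{\text{full}}=P_A$, $P_{\text{sub}}=P_{A\slash g}$; since every object below depends on the projections only through $A$, that substitution is harmless exactly as in Theorem~\ref{thm:tchi}.

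First I would record the orthogonal decomposition $y=z+(R_1-R_2)+R_2$ into the ranges of $P_{\text{sub}}$, $P_{\text{full}}-P_{\text{sub}}$, and $I-P_{\text{full}}$. Under the normal model these three blocks are independent Gaussians, and under $H_0(A,g)$ we have $(P_{\text{full}}-P_{\text{sub}})\mu=0$, so $R_1-R_2$ is centred and $\|R_1-R_2\|_2^2/\sigma^2\sim\chi^2_{d_1}$; together with the independent denominator block this yields the unconditional null law \eqref{eq:Fsaturatednull}. From the same decomposition $v_\Delta$ and $v_2$ are the directions of two blocks and $r=\|R_1\|_2$ is the radius in $(P_{\text{sub}})^\perp$. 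A Basu/beta-gamma argument then shows $T$ is independent of $(z,r,v_\Delta,v_2)$: the ratio $W=\|R_1-R_2\|_2^2/\|R_1\|_2^2\sim\mathrm{Beta}(d_1/2,d_2/2)$ is independent of the total $\|R_1\|_2^2$, and $T=\tfrac1c\,W/(1-W)$, while the two block directions are independent of the two block norms (for $v_2$ this uses $(I-P_{\text{full}})\mu=0$, a point I return to). Hence conditioning on $(z,r,v_\Delta,v_2)$ leaves the central-$F$ law of $T$ intact.

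Next I would show the selection event is a pure truncation of $T$. Using $g_1(t)^2+g_2(t)^2=r^2$ from \eqref{eq:TFv}, the map $t\mapsto u(t)=g_1(t)v_\Delta+g_2(t)v_2$ is a monotone parametrization of the quarter-circle arc of radius $r$ in $\mathrm{span}\{v_\Delta,v_2\}$, from $r v_2$ at $t=0$ to $r v_\Delta$ as $t\to\infty$, so $y=u(t)+z$ is one-to-one in $T$ along the conditioned slice. Therefore $\{M(y)=A\}$ is equivalent to $\{T\in M_A\}$ with $M_A=\{t\ge0:M(u(t)+z)=A\}$ a fixed measurable set, and conditioning additionally on selection simply truncates the central $F$ to $M_A$, which is \eqref{eq:tflaw}. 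The uniformity \eqref{eq:tfsurv} then follows from the truncated CDF transform, and since its right-hand side does not involve $(v_\Delta,v_2,A)$ it holds marginally as well.

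I expect the main obstacle to be the interface between the nonlinearity of the slice and the distributional step. Unlike the $T\chi$ case, $u(t)$ is not affine in $t$, so substituting $y=u(t)+z$ into the inequalities of Lemma~\ref{lem:groupfsquad} does not return quadratics in $t$ (the terms $g_1(t)^2$, $g_2(t)^2$, and $g_1 g_2$ are non-polynomial in $t$), which is precisely why this example escapes Definition~\ref{def:quadratic}; for the theorem only measurability and the bijectivity of $t\mapsto y$ are needed, but this must be verified with care and it is what complicates computing $M_A$ in practice. The second delicate point, flagged above, is that the central-$F$ input \eqref{eq:Fsaturatednull} and the harmlessness of conditioning on $v_2$ both require $\|R_2\|_2^2$ to be a central $\chi^2$, i.e.\ $(I-P_{\text{full}})\mu=0$; I would state this full-model condition explicitly, since unlike the $T\chi$ test the $T_F$ statistic is not robust to misspecification of the selected model.
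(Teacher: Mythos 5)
Your proof is correct and follows essentially the same route as the paper, whose entire proof is the one-line remark that $(R_1-R_2,R_2)$ are independent and that the rest mirrors the argument of Theorem~\ref{thm:tchi}; you have simply filled in the details (orthogonal block decomposition, beta--gamma independence, monotonicity of $t\mapsto u(t)$) that the paper leaves implicit. Your closing observation is also well taken: the central-$F$ null law \eqref{eq:Fsaturatednull} does require $(I-P_{\text{full}})\mu=0$ in addition to $H_0(A,g)$, a condition the paper buries in the phrase ``the usual regression test for such nested models'' despite its earlier disclaimer about not assuming the linear model is correctly specified.
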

\begin{proof}
  Using the fact that $(R_1 - R_2, R_2)$ are independent, the rest of
  the proof follows the same argument as Theorem \ref{thm:tchi}.
\end{proof}
Conditioning on $(z, v_\Delta, v_2)$ reduces power, but it is unclear
how to compute the truncation region without using this
strategy. As we see next, this is already non-trivial as a
one dimensional problem for quadratic selection regions.

\subsection{Computing the $T_F$ truncation interval}

Instead of a quadratically parametrized curve through the selection
region as we saw in the $T\chi$ case, we now must compute positive
level sets of functions of the form
\begin{equation}
\begin{aligned}
  \label{eq:TFquadform}
I_{Q,a,b}(t) &:=
   [z + r u(t)]^TQ[z+ru(t)] + a^T[z+ru(t)] + b \\
   &=  g_1^2 x_{11} + g_1g_2 x_{12} + g_2^2 x_{22} + g_1 x_1 + g_2x_2 + x_0
\end{aligned}
\end{equation}
where
\begin{equation}
\begin{aligned}
  \label{eq:TFquadexpanded}
g_1(t) = r\sqrt{\frac{ct}{1+ct}}, \quad g_2(t) = r\frac{1}{\sqrt{1+ct}} \\
x_{11} := v_\Delta^TQv_\Delta, \quad x_{12} := 2v_\Delta^TQv_2, \\
x_{22} := v_2^TQv_2, \quad x_1 := 2v_\Delta^TQz + a^Tv_\Delta, \\
x_2 := 2v_2^TQz + a^Tv_2, \quad x_0 := z^TQz + a^Tz + b. \\
\end{aligned}
\end{equation}
By continuity the positive level sets $\{ t \geq 0 : I_{Q,a,b}(t) \geq
0 \}$ are unions of intervals, and the selection event is an
intersection of sets of this form. We can use the same strategy as
before solving for each one and then intersecting the unions of
intervals. However, the curve is no longer quadratic, it is not clear
how many roots it may have, and its
derivative near 0 may approach $\infty$. So finding the roots is not
trivial. Our approach begins by reparametrizing with trigonometric
functions. There is an associated complex quartic polynomial and we
solve for its roots first using numerical algorithms specialized for
polynomials. A subset of these are potential roots of the original
function. This allows us to isolate the roots of the original function
and solve for them numerically in bounded intervals containing only
one root. The details are as follows. 

Since $c > 0$ and $t \geq 0$, we can make the substitution $ct =
\tan^2(\theta)$ with $0 \leq \theta < \pi/2$. Then 
\begin{equation}
g_1(\theta) = r \sin(\theta), \quad g_2(\theta) = r\cos(\theta)
\end{equation}
Using Euler's formula,
\begin{equation*}
  \begin{aligned}
    g_1(\theta)^2 &= r^2 \sin^2(\theta) = \frac{r^2}{2}\left[1 -
      \cos(2\theta)\right] = \frac{r^2}{4} \left[ 2 -
      e^{i2\theta}+e^{-i2\theta} \right] \\ 
    g_2(\theta)^2 &= r^2 \cos^2(\theta) = \frac{r^2}{2}\left[1 +
      \cos(2\theta)\right] = \frac{r^2}{4} \left[ 2 +
      e^{i2\theta}+e^{-i2\theta} \right] \\ 
    g_1(\theta)g_2(\theta) &= r^2
    \left[\frac{e^{i\theta}-e^{-i\theta}}{2i}\right]\left[\frac{e^{i\theta}+e^{-i\theta}}{2}\right]
    = -\frac{ir^2}{4}\left[e^{i2\theta} - e^{-i2\theta}\right] 
  \end{aligned}
\end{equation*}
motivates us to consider the complex function
\begin{equation*}
  \begin{aligned}
  \label{eq:TFquadcomplex}
  p(z) &:= -\frac{r^2}{4}(z^2+z^{-2}-2)x_{11} +
    -\frac{ir^2}{4}(z^2-z^{-2})x_{12} +
    \frac{r^2}{4}(z^2+z^{-2}+2)x_{22} \\ 
    & \qquad {} - \frac{ir}{2}(z-z^{-1})x_1 + \frac{r}{2}(z+z^{-1})x_2 + x_0 \\
    &= \frac{r}{2}\Big[ \frac{r}{2}(-x_{11} - i x_{12} + x_{22})z^2 +
    \frac{r}{2}(-x_{11} + i x_{12} + x_{22})z^{-2} + (-ix_1+x_2)z \\ 
    & \qquad {} + (ix_1 + x_2)z^{-1} + (rx_{11}+rx_{22} + 2x_0/r) \Big]
  \end{aligned}
\end{equation*}
which agrees with~\eqref{eq:TFquadform} when $z = e^{i\theta}$. Hence,
the zeroes of $I(t)$ coincide with the zeroes of the polynomial
$\tilde p(z) := z^2p(z)$ with $z = e^{i\theta}$ and $0 \leq \theta
\leq \pi/2$. This is the polynomial we solve numerically, as described
above. Finally, transforming the polynomial roots back to the original
domain may not be numerically stable, which is why we use a numerical
method to solve for them again. In the \texttt{selectiveInference}
software we use the R functions \texttt{polyroot} for the polynomial
and \texttt{uniroot} in the original domain. This customized approach
is highly robust and numerically stable, enabling accurate computation
of the selection region.

\section{Choosing a model with \textsc{AIC}}
\label{sec:aic}

Instead of running forward stepwise for $S$ steps with $S$ chosen
\textit{a priori}, we would like to choose when to stop adaptively.
An \textsc{AIC}-style criterion is one classical way of
accomplishing this, choosing the $s$ that minimizes
\begin{equation}
  -2 \log(L_s) + k \cdot \text{edf}_s
\end{equation}
where $L_s$ is the likelihood and $\text{edf}_s$ denotes the effective
degrees of freedom of the model at step $s$. As noted in
Section~\ref{sec:groupfs}, with $k = 2$ this is the usual
AIC, with $k = \log(n)$ it is \textsc{BIC}, and with $k = 2\log(p)$ it
is RIC.
Note that $\text{edf}$
includes whether the error variance $\sigma$ has been estimated. For
example, for a linear model with unknown $\sigma$ minimizing the above
is equivalent to minimizing
\begin{equation}
  \log(\| y - X_s \beta_s \|_2^2) + k \cdot (1+\|\beta_s\|_0)/n
\end{equation}
where $\|\cdot\|_0$ counts the number of nonzero entries. Instead of
taking the approach which minimizes this over $1 \leq s \leq S$, we
adopt an early-stopping rule which picks the $s$ after which the
\textsc{AIC} criterion increases $s_+$ times in a row, for some $s_+
\geq 1$. Readers familiar with the R programming language might
recognize this as the default behavior of \texttt{step} when $s_+ =
1$.

Fortunately, since the stopping rule depends only on successive values
of the quadratic objective which we are already tracking, we can
condition on the event $\{ \hat s = s \}$ by appending some additional
quadratic inequalities encoding the successive comparisons. The
results concerning computing $T\chi$ or $T_F$ remain intact, with the
only complication being the addition of a few more inequalities in the
computation of $M_A$.

\section{Theory and power}
In this section we describe the behavior of the $T\chi$ statistic
under some simplifying assumptions and evaluate power against some
alternatives.

\begin{defn}
  By groupwise orthogonality or orthogonal groups we mean
  that $X_g^TX_h = 0_{p_g  \times p_h}$ for all $g, h$, and the
  columns of $X$ satisfy $\| X_j \|_2 = 1$ for all $j$.
\end{defn}

\begin{prop}
  \label{prop:order}
  Let $T_s$ denote the observed $T\chi$ statistic for the group that
  entered the model at step $s$. With orthogonal groups of equal size
  these are ordered $T_{1} > T_{2} > \cdots > T_{S}$.
  Further, the truncation regions are given exactly
  by this ordering along with $T_1 < \infty$, $T_S > T_{S+1}$ where
  $T_{S+1}$ is the $T\chi$ statistic for the next variable that would
  enter the model, understood to be 0 if there are no remaining variables.
\end{prop}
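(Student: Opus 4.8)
The plan is to exploit how drastically the orthogonal-groups assumption simplifies both forward stepwise and the $T\chi$ statistic, reducing the whole proposition to a statement about one sequence of order statistics. First I would observe that under groupwise orthogonality the projections $P_g = X_gX_g^\dagger$ onto distinct groups annihilate one another, $P_gP_h = 0$ for $g \neq h$, so that the orthogonalization step in Algorithm~\ref{algo:fs} is vacuous: $P_s = I - \sum_{j<s}P_{g_j}$ acts as the identity on the range of any not-yet-selected $P_g$. Consequently $P_s X_g = X_g$, hence $P_g^s = P_g$ and $P_gr_s = P_gy$ for every $g \in A_s^c$. Because the groups have equal size, $\Tr(P_g)$ is constant in $g$, so the penalty $k\sigma^2\Tr(P_g^s)$ drops out of the $\argmax$ and the step-$s$ criterion reduces to $\|P_gy\|_2^2$. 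Thus forward stepwise simply ranks the groups by the \emph{fixed, step-independent} quantities $v_g := \|P_gy\|_2^2$ and selects them in decreasing order, with $g_s$ attaining the $s$-th largest value.

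Next I would identify the test statistic with these same quantities. Orthogonality gives $(I-P_{A\slash g})X_g = X_g$, so $\tilde P_g = P_g$ and, from \eqref{eq:chisaturatednull}, $T_s^2 = \sigma^{-2}\|P_{g_s}y\|_2^2 = v_{g_s}/\sigma^2$. Since the $g_s$ are selected in decreasing order of $v_g$, the ordering $T_1 > T_2 > \cdots > T_S$ is immediate, which settles the first claim.

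For the truncation region of the group entering at step $s$, I would substitute $y(t) = ut\sigma + z$ into the criterion, where $u = P_{g_s}y/\|P_{g_s}y\|_2$ and $z = (I-P_{g_s})y$ as in Theorem~\ref{thm:tchi}. The key computation, using $P_{g_s}u = u$, $P_{g_s}z = 0$, and $P_hu = 0$, $P_hz = P_hy$ for $h \neq g_s$, is that along this slice $\|P_hy(t)\|_2^2 = v_h$ stays frozen for every $h \neq g_s$ while $\|P_{g_s}y(t)\|_2^2 = t^2\sigma^2$ is the only moving coordinate. Rerunning the algorithm on $y(t)$ again amounts to ranking these values, so $M(y(t)) = A$ holds precisely when $g_s$ retains rank $s$ among them, i.e. when $v_{g_{s+1}} < t^2\sigma^2 < v_{g_{s-1}}$. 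Taking positive square roots and dividing by $\sigma$ yields $M_A = (T_{s+1}, T_{s-1})$, with the conventions $T_0 = \infty$ (no upper constraint at the first step, giving $T_1 < \infty$) and $T_{S+1}$ the statistic of the best remaining group, $\sigma^{-1}\sqrt{\max_{h\notin A}v_h}$ (giving the lower endpoint $T_S > T_{S+1}$ at the last step). This is exactly the claimed description.

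The main obstacle is the bookkeeping behind the truncation region rather than any deep inequality: one must verify that sweeping the single coordinate $t^2\sigma^2$ through $[0,\infty)$ changes only the rank of $g_s$, and that the first ordering relation violated in each direction is governed by the \emph{immediately adjacent} statistic $v_{g_{s-1}}$ or $v_{g_{s+1}}$ (for $s = S$, by $\max_{h\notin A}v_h$), so that no non-adjacent comparison becomes binding first. This follows from monotonicity of the ranking in $t^2\sigma^2$ once all other values are held fixed, and the remaining care is simply in treating the boundary cases $s = 1$ and $s = S$ correctly.
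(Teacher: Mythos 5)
Your proposal is correct and follows essentially the same route as the paper's proof: orthogonality makes the residualization vacuous so forward stepwise simply ranks the fixed quantities $\|P_g y\|_2^2$ (the paper writes these as $\|U_g^T y\|$ via the SVD), equal group sizes cancel the penalty, and along the one-dimensional slice only the $g_s$ coordinate moves, so the binding constraints reduce to the adjacent statistics. The only (minor) difference is that you treat $S < G$ explicitly by identifying $T_{S+1}$ with the best remaining group, whereas the paper assumes $S = G$ for simplicity.
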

\begin{proof}
  For simplicity we assume $S = G$ so that $T_{S+1} = 0$.
  For each $g$ let $X_g = U_gD_gV_g^T$ be a singular value decomposition.
  Hence $U_g$ is a matrix constructed of orthonormal columns forming a
  basis for the column space of $X_g$. The first group to enter
  $g_1$ satisfies
  \begin{equation}
  g_1 = \argmax_g \| U_g^T y \|
  \end{equation}
  Note that $r_1 = y - U_{g_1}U_{g_1}^Ty$ is the residual after the
  first step, and that $U_g - U_{g_1}U_{g_1}^TU_g = U_g$ for all $g
  \neq g_1$ by orthogonality. The second group $g_2$ to enter
  satisfies
  \begin{equation}
  g_2 = \argmax_{g \neq g_1} \| U_g^T r_1 \| = \argmax_{g \neq g_1} \| U_g^T y \|
  \end{equation}
  because $U_g^TU_{g_1} = 0$ implies $U_g^T r_1 = U_g^T y$. By
  induction it follows that the $T_i = \| U_{g_i}^T y \|$ are
  ordered.

  Since we have assumed all groups have equal size,
  the truncation region for $T_s$ is determined by the intersection of
  the inequalities
  \begin{equation}
  (\eta_{s,i} t + z_{s,i})^T(U_{g_i}U_{g_i}^T - U_{g_j}U_{g_j}^T)(\eta_{s,i}t + z_{s,i})
  \geq 0 \quad \forall i,j \text{ s.t. } i \leq j \leq G.
  \end{equation}
  By orthogonality, $\eta_{s,i} = \eta_s \propto U_{g_s}U_{g_s}^Ty$
  and $z_{s,i} = z_s = y - U_{g_s}U_{g_s}^Ty$ for all $i$.
  Hence if $s \notin \{ i, j \}$ the inequality is satisfied for
  all $t$. If $s = i$, after expanding the inequalities become $t^2 -
  T^2_j \geq 0$ for all $j$, and by the ordering this reduces to $t
  \geq T_{g_{s+1}}$. The case $s = j$ is similar and yields one upper
  bound $T_{g_{s-1}} \geq t$ (implicitly we have defined $T_{g_0} =
  \infty$).
\end{proof}
Proposition~\ref{prop:order} allows us to analyze the power in the
orthogonal groups case by evaluating $\chi$ survival functions with
upper and lower limits given simply by the neighboring $T\chi$
statistics. To apply this we still need to deal with the fact that the
order of the noncentrality parameters may not correspond to the order
of groups entering the model, and it is also possible for null
statistics to be interspersed with the non-nulls.

\begin{figure}[h!]
  \centering
  \includegraphics[width=\textwidth]{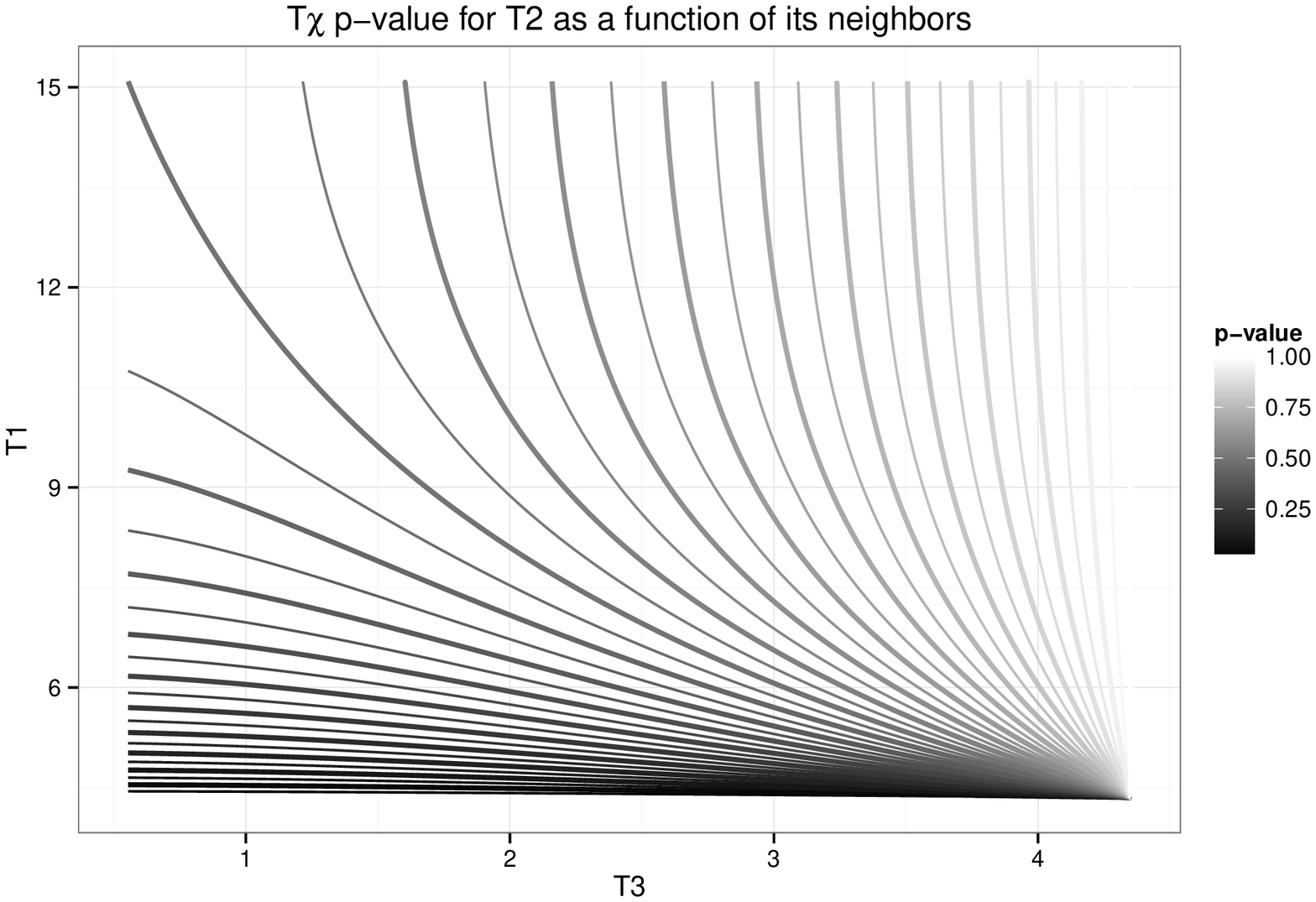}
  \caption{Contours of $T\chi$ $p$-value for $T_2$ as a function of
    its neighbors. Here $T_2 \approx 4.35$ (corresponding to the 50\%
    quantile of a $\chi^2_5$) is fixed and the upper and lower
    truncation limits vary.}
  \label{fig:powercontour}
\end{figure}

Proposition~\ref{prop:order} also
yields some heuristic understanding. For example, we see
explicitly the non-independence of the $T\chi$ statistics, but also
note that in this case the dependency is only on neighboring
statistics. Further, we see how the power for a given test depends on
the values of the neighboring statistics. An example with groups of
size 5 is plotted in Figure~\ref{fig:powercontour}. This example
considers a fixed value of $T_2$ and plots contours of the $p$-value
for $T_2$ as a function of the neighboring statistics. It is apparent
that when $T_3 \to T_2$ (right side of plot),
the $p$-value for $T_2$ will be large irrespective of $T_1$.

This motivates us to consider when a given nonnull $T_i$ is likely
to have a close lower limit, since that scenario results in low
power. Since a non-central $\chi^2_k(\lambda)$ with non-centrality
parameter $\lambda$ has mean $k+\lambda$ and variance $2k+4\lambda$,
we expect that the nonnull $T_i$ will be larger and have greater
spread than the $T_i$ corresponding to null variables. Hence, if the
true signal is sparse then the risk
of having a nearby lower limit is mostly due to the bulk of null
statistics. By upper bounding the largest null statistic, we get an
idea of both when forward stepwise will select the nonnull variables
first and when the worst case for power can be excluded.  
Using Lemma 1 of \cite{laurent2000adaptive}, for $\epsilon > 0$ if we
define
\begin{equation}
 x = -\log(1-(1-\epsilon)^{1/G})
\end{equation}
then if all $T_i$ are null (central) we have
\begin{equation}
 \mathbb P\left(\max_{1 \leq i \leq G} T_i^2 > k + 2\sqrt{kx} + 2x\right) < \epsilon.
\end{equation}
Table~\ref{tab:screenbounds} gives values of the upper bound
$k + 2\sqrt{kx} + 2x$ as a 
function of $G$ and $k$, where there are $G$ null groups of equal size $k$
and error variance $\sigma^2 = 1$. For example, with $G = 50$ groups
of size $k = 2$ the null $\chi^2_k$ statistics will all be below 27.28 with
99\% probability and 21.35 with 90\% probability.

    \begin{table}[ht]
      \caption{High probability upper bounds for central $\chi^2_k$
        statistics corresponding noise variables. The left panel gives
        99\% bounds and the right panel 90\%. This assumes $G$ noise
        groups of equal size $k$, groupwise orthogonality, and error
        variance $\sigma^2 = 1$.} 
      \label{tab:screenbounds}
      \centering
      \begin{tabular}{r|rrrr}
        & \multicolumn{4}{c}{$k$} \\
        \hline
        $G$ & 2 & 5 & 10 & 50 \\
        \hline
        10 & 23.24 & 30.56 & 40.42 & 100.96 \\
        20 & 24.99 & 32.52 & 42.62 & 104.17 \\
        50 & 27.28 & 35.07 & 45.48 & 108.29 \\
        100 & 28.99 & 36.98 & 47.60 & 111.32 \\
        1000 & 34.61 & 43.19 & 54.47 & 120.99 \\
        \hline
      \end{tabular}
      \quad
      \begin{tabular}{r|rrrr}
        & \multicolumn{4}{c}{$k$} \\
        \hline
        $G$ & 2 & 5 & 10 & 50 \\
        \hline
        10 & 17.16 & 23.66 & 32.62 & 89.31 \\
        20 & 18.98 & 25.74 & 34.99 & 92.90 \\
        50 & 21.35 & 28.43 & 38.03 & 97.44 \\
        100 & 23.12 & 30.42 & 40.27 & 100.74 \\
        1000 & 28.88 & 36.85 & 47.46 & 111.11 \\
        \hline
      \end{tabular}
    \end{table}

\begin{figure}[!tbp]
  \centering
  \subfloat[Probability of a single nonnull exceeding
  the 90\% theoretical bounds in Table~\ref{tab:screenbounds}.]
  {\includegraphics[width=\textwidth]{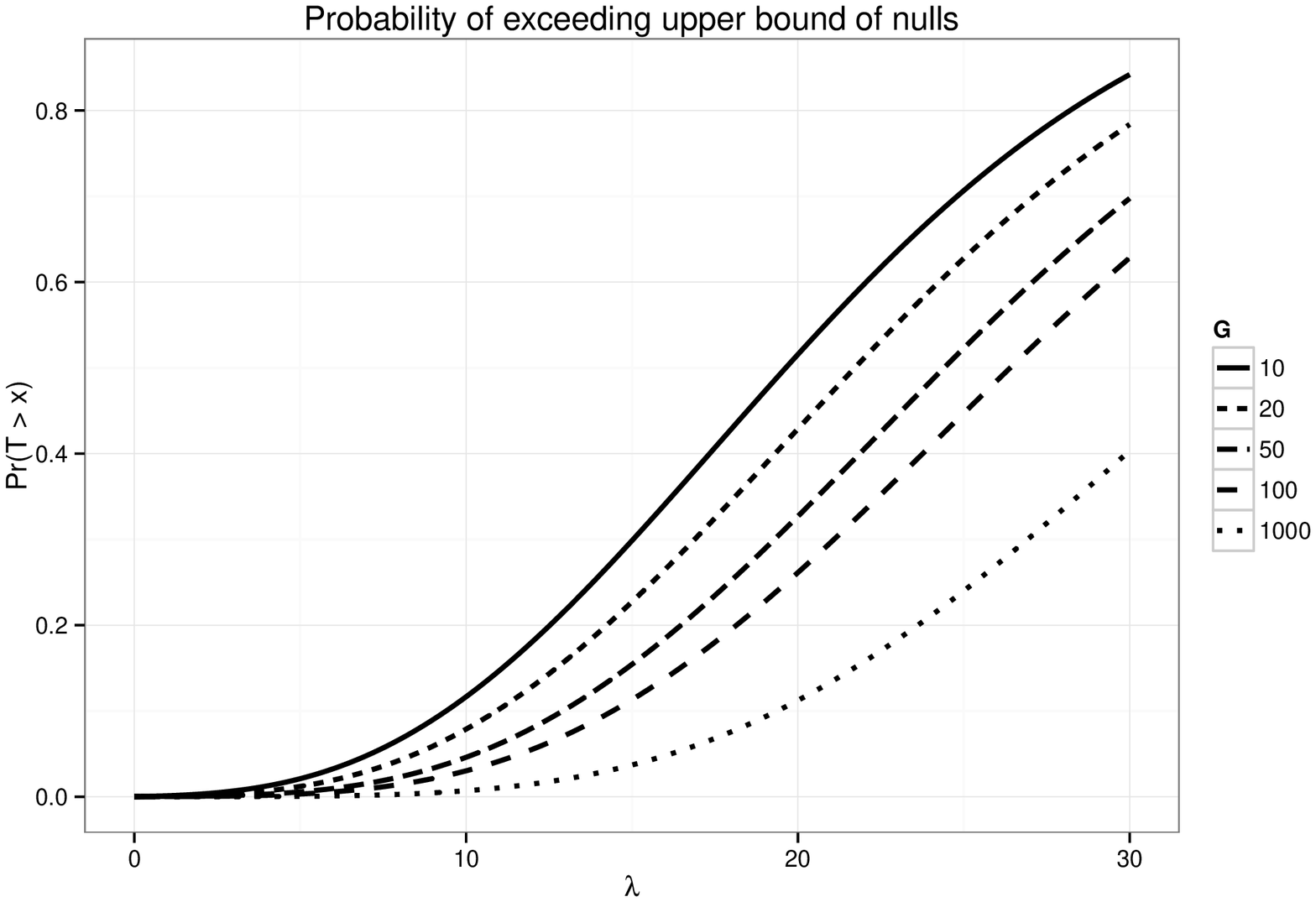}\label{fig:f1}}
\\
  \subfloat[Smoothed estimates of power functions for 1-sparse
  alternative and $G$ null variables.]
  {\includegraphics[width=\textwidth]{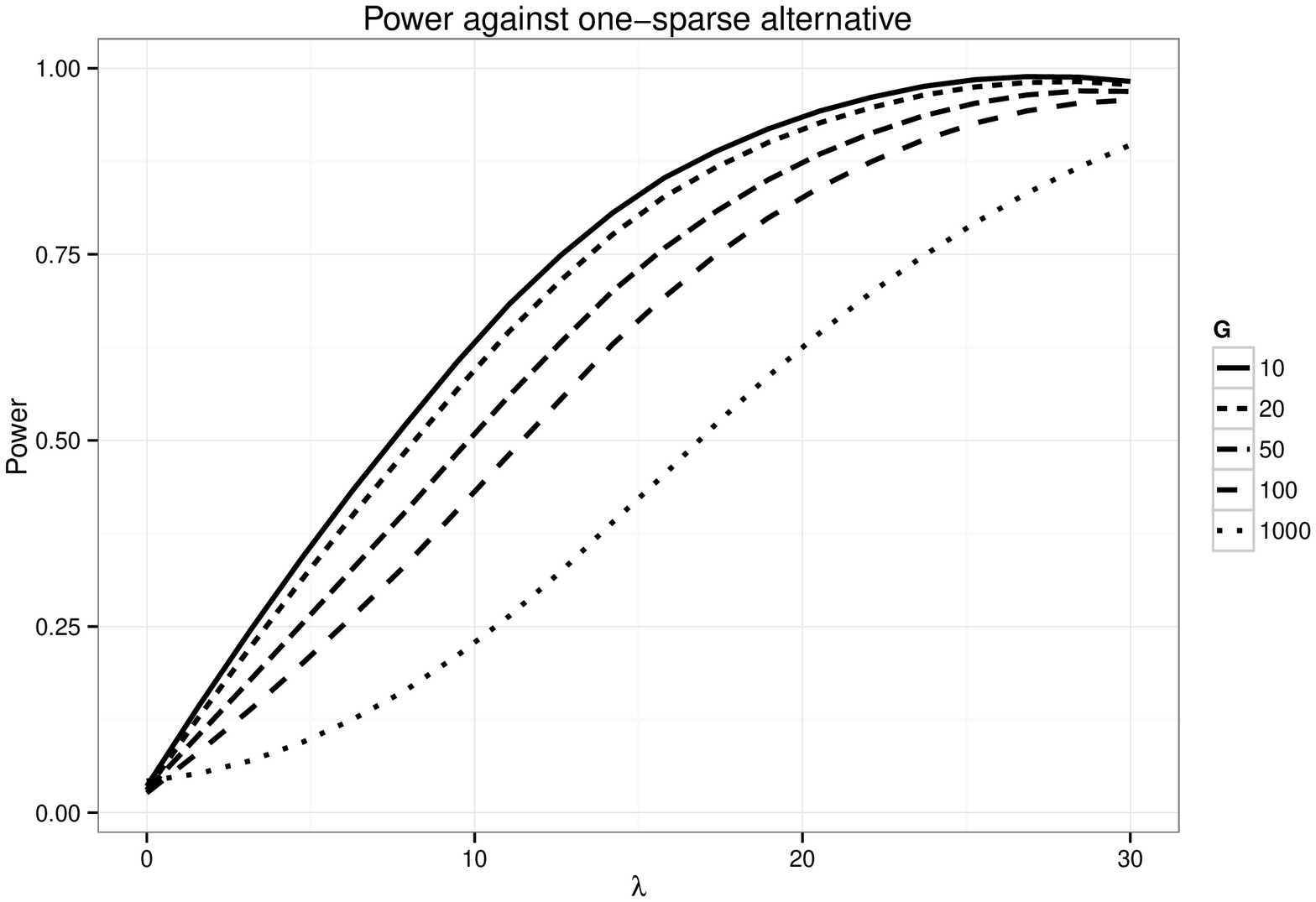}\label{fig:f2}}
  \caption{Theoretical and empirical curves related to power against a
    one-sparse non-central $\chi^2_5$ with non-centrality parameter
    $\lambda$.}
  \label{fig:powercurves}
\end{figure}

The power curves plotted in Fig~\ref{fig:powercurves} show this
theoretical bound is likely more conservative than necessary.
In fact, let us simplify a bit more by assuming orthogonal groups of
size 1 (single columns), and a 1-sparse alternative with magnitude
$(1+\delta)\sqrt{2\log(p)}$ for $\delta > 0$. Then the standard tail
bounds for Gaussian random variables imply that the $T\chi$ test is
asymptotically equivalent to Bonferroni and hence asymptotically
optimal for this alternative. This is discussed in
\cite{loftus2014significance} and a little more generally in
\cite{taylor2013tests}. The 1-sparse case with orthogonal groups of
fixed, equal size follows similarly using tail bounds for $\chi^2$
random variables.

To translate between the non-centrality parameter $\lambda$ and a linear
model coefficient, consider that when $y = X_g \beta_g + \epsilon$ and
the expected column norms of $X_g$ scale like $\sqrt{n}$, then the
non-centrality parameter for $T_g$ will be roughly equal to $\sqrt{n}
\| \beta_g\|_2^2$ (under groupwise orthogonality).

Finally, when groups are not orthogonal, the greedy nature of forward stepwise
will generally result in less power since part of the variation in
$y$ due to $X_g$ may be regressed out at previous steps.  

\section{Simulations}
\label{sec:sim}

\begin{figure}[h!]
  \centering
  \includegraphics[width=\textwidth]{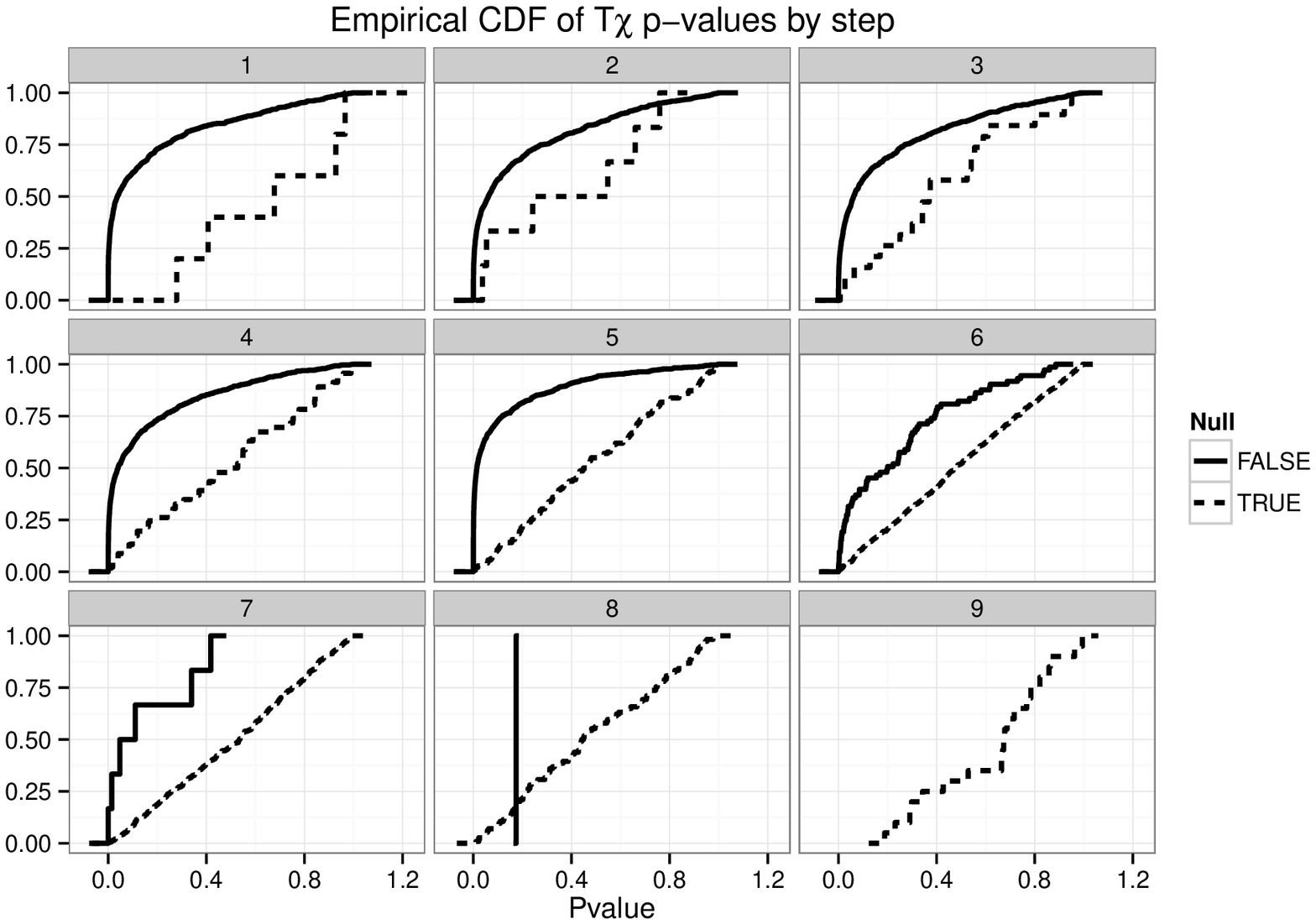}
  \caption{Empirical CDFs of $T\chi$ $p$-values plotted by
    step. Models chosen by BIC, with model size distribution described
    in Table~\ref{tab:bicsizes}. Design matrix had $n = 100$
    observations and $p = 100$ variables in $G = 50$ groups of size
    2. The true sparsity is 5, nonzero entries of $\beta$ equal
    $\pm 2\sqrt{\log(G)/n} \approx \pm 0.4$. Dotted lines correspond
    to null variables and the solid, off-diagonal lines to signal
    variables. }
  \label{fig:ecdfbystepbic}
\end{figure}

To evaluate the $T\chi$ test when groupwise orthogonality does not
hold, we conduct a simulation experiment with correlated Gaussian
design matrices. In each of 1000 realizations forward stepwise is run
with number of steps chosen by BIC. Table~\ref{tab:bicsizes}
summarizes the resulting model sizes and how many of the 5 nonnull
variables were included.

\begin{table}[ht]
\caption{Size and composition of models chosen by BIC in the
  simulation described in Fig~\ref{fig:ecdfbystepbic}.} 
\label{tab:bicsizes}
\centering
\begin{tabular}{r|rrrrrr}
  \hline
Model size & 4 & 5 & 6 & 7 & 8 & 9 \\ 
  \# Occurrences & 4 & 42 & 511 & 328 & 95 & 20 \\ 
   \hline
\end{tabular}
\begin{tabular}{r|rrr}
\# True signals included & 3 & 4 & 5 \\ 
\# Occurrences &  18 & 106 & 876 \\ 
   \hline
\end{tabular}
\end{table}

Fig \ref{fig:ecdfbystepbic} shows the empirical distributions of both
null and nonnull $T\chi$ $p$-values plotted by the step the
corresponding variable was added. Table~\ref{tab:power} shows the
power of rejecting at the $\alpha = 0.05$ level also by step. Note
that it was rare for nonnull variables to enter at later steps, so the
observed power of 0 at step 8 is not too surprising.

\begin{table}[ht]
\caption{Empirical power for nonnull variables in the simulation
  described in Fig~\ref{fig:ecdfbystepbic}.}
\label{tab:power}
\centering
\begin{tabular}{r|llllllll}
  \hline
  step & 1 & 2 & 3 & 4 & 5 & 6 & 7 & 8 \\ 
  Power & 0.532 & 0.470 & 0.454 & 0.521 & 0.641 & 0.315 & 0.500 & 0.000 \\ 
   \hline
\end{tabular}
\end{table}

\bibliographystyle{imsart-nameyear.bst}
\bibliography{biblio}

\end{document}